\documentclass[10pt,journal,a4paper]{IEEEtran}
\usepackage[T1]{fontenc}
\usepackage[utf8]{inputenc}
\usepackage{amsmath,amssymb,amsthm,booktabs,graphicx,microtype}
\usepackage[hidelinks]{hyperref}
\usepackage{url}

\hypersetup{
  pdftitle={Bid Lattices and the Value of Flexibility: A Granularity Ratio
            for Capacity Markets},
  pdfauthor={Brieuc Le Roux-Tardif}
}

\newcommand{\SubEligStandalone}{61.0}
\newcommand{\SubEligCapacity}{69.9}
\newcommand{\SubEligDisplaced}{17.7}
\newcommand{\SubEligBeta}{85.5}
\newcommand{\ContinuousRevenue}{113.14}
\newcommand{\IntegerRevenue}{108.60}
\newcommand{\AnnualBias}{4.54}
\newcommand{\RelativeBias}{4.2}
\newcommand{\CapitalisedBias}{38.9}
\newcommand{\MaxEligiblePower}{1.5}
\newcommand{\MaxEligibleBias}{11.6}
\newcommand{\AnchorSpread}{21.2}
\newcommand{\HalfHourBias}{404}
\newcommand{\LatticeCases}{42}
\newcommand{\CapacityIntegral}{86}
\newcommand{\PeakBias}{21.5}
\newcommand{\PeakRho}{0.505}
\newcommand{\BoundTightness}{19}
\newcommand{\ContSpot}{43.28}
\newcommand{\IntSpot}{39.80}
\newcommand{\ContFcr}{69.86}
\newcommand{\IntFcr}{68.80}
\newcommand{\GapSpot}{3.48}
\newcommand{\GapFcr}{1.06}
\newcommand{\SpotShareOfGap}{77}
\newcommand{\FcrShareOfGap}{23}
\newcommand{\ContPledge}{0.64}
\newcommand{\IntPledge}{0.58}
\newcommand{\InvarianceMaxDev}{a relative 8e-15}

\newcommand{\BootBlock}{30}
\newcommand{\BootDraws}{2000}
\newcommand{\BootLow}{3.6}
\newcommand{\BootHigh}{4.6}
\newcommand{\BetaYearly}{3.0\% in 2024, 4.6\% in 2025, 4.5\% in 2026}
\newcommand{\SolverDevCentral}{a relative 4e-10}

\newcommand{\eur}{EUR}
\newtheorem{proposition}{Proposition}
\newtheorem{corollary}[proposition]{Corollary}

\begin{document}

\title{Bid Lattices and the Value of Flexibility:\\
A Granularity Ratio for Capacity Markets}

\author{Brieuc Le~Roux-Tardif%
\thanks{The author is with IMT Nord Europe, Institut Mines-T\'el\'ecom,
Univ.\ Lille, Centre for Energy and Environment, F-59000 Lille, France
(e-mail: brieuclerouxtardif@gmail.com).}%
\thanks{Manuscript prepared 8 August 2026. The optimisation model, the price and
capacity series and the raw case-by-case results that generate every number in
this paper are versioned in two companion repositories; see the reproducibility
section.}}

\markboth{Preprint, August 2026}{Le Roux-Tardif: Bid Lattices and the Value of
Flexibility}

\maketitle

\begin{abstract}
Capacity markets award reserve in discrete increments, yet valuation models
almost always treat the awarded quantity as a continuous variable. We show that
the size dependence of the resulting error is organised by a dimensionless number: the
granularity ratio $\rho=\delta/P$, where $\delta$ is the award increment and
$P$ the installed power of the asset. The co-optimisation of energy arbitrage
and symmetric reserve is positively homogeneous, so, with market data and all
other operating parameters fixed, asset size enters per-megawatt revenue only
through $\rho$ and storage duration; sweeping installed power at a fixed
increment and sweeping the increment at a fixed power are the same experiment.
Three structural results follow. Rounding a continuous award down to the
lattice is always feasible, which bounds the valuation error by $\rho$ times the
revenue of selling one megawatt in every product. That upper bound therefore
falls as the inverse of asset size. The largest award the lattice permits is
$P\rho\lfloor 1/\rho\rfloor$, so up to half of a connection can be
structurally unsellable, the worst case being an asset sized just below twice
the increment. Pledging the entire connection forces a zero net grid position
for the whole product, which converts a partial capacity commitment into
exclusion from the energy market. On 943 days of French day-ahead and
frequency containment reserve settlement prices, a 1~MW two-hour battery
priced with a continuous award is worth \RelativeBias\%
more than the same asset restricted to the 1~MW French lattice, and
\SpotShareOfGap\% of that difference is displaced arbitrage rather than
foregone capacity. The bias is not monotone in asset size: the largest value on
our finite grid is \PeakBias\% at $\rho=\PeakRho$. Bid granularity is therefore a
market-design variable with a size-dependent incidence, and the measured gap is
simultaneously a modelling error, a sizing rule, and an upper bound on what a
standalone asset should pay an aggregator.
\end{abstract}

\begin{IEEEkeywords}
Energy storage, ancillary services, market design, indivisibilities,
mixed-integer programming, frequency containment reserve.
\end{IEEEkeywords}

\section{Introduction}

\IEEEPARstart{A}{lmost} every published valuation of a flexible asset in a
capacity market represents the awarded quantity as a continuous decision
variable. The choice is natural. It removes the only integer variable that a
day of dispatch would otherwise need, it is fast, and the awarded quantity is
physically continuous. Actual
markets, however, award capacity on a lattice. French frequency containment
reserve (FCR) is procured in 1~MW increments with a 1~MW minimum
\cite{rte2026,entsoe2026}; comparable minimum sizes and increments exist across
European reserve products and capacity mechanisms. A standalone 0.99~MW battery
cannot offer 0.99~MW, and a 1.99~MW battery cannot offer 1.99~MW.

This paper treats the award increment as a first-class parameter rather than an
implementation detail. Its central object is the dimensionless \emph{granularity
ratio}
\begin{equation}
  \rho = \frac{\delta}{P},
  \label{eq:rho}
\end{equation}
where $\delta$ is the award increment and $P$ the installed power at the grid
connection. We show that the joint energy-and-reserve optimisation is positively
homogeneous, so, conditional on market data and all other operating parameters,
asset size enters revenue per megawatt only through $\rho$ and storage duration.
That observation reorganises the problem. A continuous
model is the limit $\rho\to0$; when the minimum bid equals the increment, assets
with $\rho>1$ are excluded; and the entire family of results that a valuation study normally
reports as sensitivity to installed power is, exactly, a sweep in $\rho$.

Three structural consequences follow, each with a one-line proof and each
verified numerically. First, rounding a continuous award down onto the lattice
preserves feasibility of the whole schedule, which bounds the valuation error by
$\rho\Lambda$, where $\Lambda$ is the revenue of selling one megawatt of
capacity in every product of the horizon. The error is therefore $O(\rho)$ and
vanishes at least as fast as the inverse of asset size. Second, the largest award the
lattice permits is $\delta\lfloor P/\delta\rfloor$, so a fraction
$u(\rho)=1-\rho\lfloor1/\rho\rfloor$ of the connection is structurally
unsellable; $u$ is a sawtooth whose supremum over eligible assets is one half,
approached just below twice the increment. Third, an asset that pledges its entire
connection power is forced to a zero net grid position for the full duration of
the product, so at $\rho=1$ the bidder chooses between capacity and energy
rather than combining them.

The empirical section instantiates the theory on the French market with 943
days of day-ahead and FCR settlement prices from 2024 to 2026, a dense sweep of
$\rho$, a direct test of the homogeneity result, and a block bootstrap on the
central estimate. The measured gap at $\rho=1$ is \RelativeBias\%, of which
\SpotShareOfGap\% is displaced arbitrage rather than foregone capacity, which is
the opposite of the intuition that a coarse lattice mainly costs capacity
revenue.

The contribution is deliberately narrow in mechanism and broad in scope. We
propose no new dispatch algorithm and no new market. We identify the parameter
that controls a widespread modelling error, prove what can be proved about it,
measure what remains, and state the three uses of the resulting number: a
correction for modellers, a sizing rule for developers, and a distributional
diagnostic for market designers.

\section{Bid Lattices Among Non-Convexities}

Indivisibility is a classical source of difficulty in market design
\cite{scarf1994}. In electricity, the literature has concentrated on
indivisibilities of the \emph{technology}: start-up costs, minimum stable
generation and minimum up and down times make the unit commitment problem
non-convex, which is why linear prices do not support the efficient allocation
and why uplift, convex-hull pricing and their variants exist
\cite{oneill2005,liberopoulos2016,knueven2020}. European day-ahead coupling adds
indivisibility of the \emph{order}: block and fill-or-kill bids are integer
objects whose acceptance is decided by a mixed-integer program
\cite{madani2015}.

Bid granularity is a third kind, less studied and easier to overlook. Nothing
about a battery is indivisible: it can provide 0.37~MW of symmetric reserve as
easily as 1~MW. The indivisibility lives entirely in the product definition. It
is imposed by the market rather than by the technology, it applies uniformly to
all participants, and it therefore has a distributional incidence that depends
only on how large the participant is relative to the increment. That is the
object of this paper.

In the storage literature, the award increment appears occasionally as a
constraint but almost never as an estimand. Reviews of storage modelling
practice place bid granularity among the details routinely abstracted away
\cite{huneke2026}. Revenue-stacking studies establish that combined energy and
reserve operation raises value substantially
\cite{englberger2020,seifert2024,mirzaei2025}, and recent work adds degradation,
uncertainty and non-uniform bidding
\cite{brandle2026,hendrickx2026}. Dufo-L\'opez \emph{et al.} do impose an integer
FCR quantity when sizing a photovoltaic and battery plant \cite{dufolopez2024},
and market-compliant mixed-integer formulations encode technical requirements
directly \cite{mirzaei2025}. What is missing in all of them is the comparison
itself: none holds every other equation fixed and reports what the increment
costs. Physically, FCR provision is energy constrained even though capacity is
the remunerated product, so endurance and efficiency interact with any
quantity rule \cite{thien2017,hollinger2017}.

Our contribution relative to that body of work is a change of parametrisation.
Once the problem is written in $\rho$, the integrality gap of a
capacity-market bid becomes a one-dimensional object with provable structure,
and the classical rounding argument of integer programming
\cite{nemhauser1988} yields a bound that is informative rather than generic.

\section{Model}

We consider a single price-taking storage asset operating over one day, and we
repeat the day independently over the sample. Let $t\in\mathcal T$ index
energy-market intervals of length $\Delta t_t$ and let $b\in\mathcal B$ index
capacity products, each of which is a set of consecutive intervals. Write
$T_b=\sum_{t\in b}\Delta t_t$ for the elapsed duration of product $b$.

The asset is described by its grid-side power $P$, its usable energy $E$, and
its one-way efficiency $\eta$. Prices are the energy price $\pi_t$ and the
capacity price $\lambda_b$, both known ex post. The market requires that an
awarded megawatt be sustainable at full activation for an endurance $h$ in both
directions. Decision variables are charge $c_t$, discharge $d_t$, stored energy
$e_t$ at the end of interval $t$, and the symmetric capacity award $r_b$, held
constant within a product because that is what the market buys.

The program is
\begin{align}
  \max\ & \sum_{t}\pi_t(d_t-c_t)\Delta t_t+\sum_b \lambda_b r_b T_b
  \label{eq:obj}\\
  \text{s.t.}\ & e_t=e_{t-1}+(\eta c_t-d_t/\eta)\Delta t_t,
  \label{eq:soc}\\
  & 0\le c_t,d_t\le P,\qquad 0\le e_t\le E,
  \label{eq:box}\\
  & -P+r_b\le d_t-c_t\le P-r_b, & \forall t\in b,
  \label{eq:power}\\
  & r_b h/\eta\le e_{t-1},e_t\le E-r_b h\eta, & \forall t\in b.
  \label{eq:energy}
\end{align}
Constraint \eqref{eq:power} reserves headroom in power: capacity sold is power
that cannot be used for arbitrage in either direction. Constraint
\eqref{eq:energy} reserves headroom in energy, and is imposed on the stored
energy both before and after the scheduled action so that the model cannot award
reserve that only becomes deliverable after charging within the same interval.
Three further conditions close the day: charge and discharge are mutually
exclusive at every interval; discharged energy over the day is capped
at $\kappa$ equivalent full cycles; and stored energy returns to a fixed
fraction $\sigma$ of $E$ at the start and end of every day, so that no day
borrows inventory from its neighbour.

Everything above is standard. The object of this paper is the single remaining
constraint, the domain of the award:
\begin{equation}
  r_b\in\delta\mathbb Z_+\cap[0,P]
  \qquad\text{with}\qquad \delta\ge0,
  \label{eq:lattice}
\end{equation}
where $\delta=0$ is read as the continuous case $r_b\in[0,P]$. For
$\delta>0$, this formulation sets the minimum positive award equal to its
resolution, as in the FCR Cooperation. Markets with a distinct minimum require
a second dimensionless ratio. Let $R(\rho)$ denote the optimal value of
\eqref{eq:obj}--\eqref{eq:lattice}
per megawatt of installed power and per year, and define the relative
valuation bias of the continuous relaxation
\begin{equation}
  \beta(\rho)=\frac{R(0)-R(\rho)}{R(\rho)}.
  \label{eq:beta}
\end{equation}
Capacity prices satisfy $\lambda_b\ge0$. Finally let
\begin{equation}
  \Lambda=\sum_b\lambda_b T_b
  \label{eq:lambda}
\end{equation}
be the capacity revenue that one megawatt would earn by being sold in every
product of the sample, annualised like every other figure in this paper.
$\Lambda$ is a property of the price series alone and requires no optimisation,
which is what makes the bound of Section~IV-B usable before any model is
solved.

\section{Structural Results}

All results in this section hold for any energy-price path, nonnegative capacity
prices and any feasible parameter set. They are properties of the formulation,
not of the French data.

\subsection{The problem depends on size only through \texorpdfstring{$\rho$}{rho}}

\begin{proposition}[Homogeneity]\label{prop:scale}
Scaling $P$, $E$ and $\delta$ by the same factor $\alpha>0$ scales the optimal
value of \eqref{eq:obj}--\eqref{eq:lattice} by $\alpha$, and scales every
optimal decision variable by $\alpha$.
\end{proposition}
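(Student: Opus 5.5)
The plan is to exhibit a bijection between the feasible sets of the original and scaled problems. The bijection is the linear map $\Phi_\alpha:(c,d,e,r)\mapsto(\alpha c,\alpha d,\alpha e,\alpha r)$. I then check that it multiplies the objective by exactly $\alpha$. Optimality then transfers in both directions, because $\Phi_\alpha$ is invertible with inverse $\Phi_{1/\alpha}$, which is of the same form.

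The core step is a constraint-by-constraint verification that every constraint is positively homogeneous of degree one jointly in the decision variables and the size parameters $(P,E,\delta)$. Prices $\pi_t,\lambda_b$, interval lengths $\Delta t_t$, product durations $T_b$, efficiency $\eta$, endurance $h$, cycle count $\kappa$ and terminal fraction $\sigma$ are held fixed. The checks proceed as follows.
\begin{itemize}
\item The state equation \eqref{eq:soc} is linear and homogeneous in $(e,c,d)$, so it is preserved.
\item The box constraints \eqref{eq:box} and the headroom constraints \eqref{eq:power}, \eqref{eq:energy} are linear inequalities whose right-hand sides are $0$, $P$, or $E$, or linear combinations of these with the variables. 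Multiplying both sides by $\alpha>0$ preserves each inequality once $P\to\alpha P$ and $E\to\alpha E$.
\item The daily closing conditions also scale. The cycle cap reads $\sum_t d_t\Delta t_t\le\kappa E$, and the boundary condition reads $e_0=e_{|\mathcal T|}=\sigma E$. Both have right-hand sides linear in $E$.
\item Mutual exclusivity, $c_td_t=0$ (or its binary big-$M$ encoding with $M=P$), is invariant under positive scaling: $c_t=0$ if and only if $\alpha c_t=0$, and the binary indicators are left unchanged.
\item The lattice \eqref{eq:lattice} satisfies $\alpha(\delta\mathbb Z_+\cap[0,P])=\alpha\delta\mathbb Z_+\cap[0,\alpha P]$, including the continuous case $\delta=0$.
\end{itemize}

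The objective \eqref{eq:obj} is linear in the decision variables with size-independent coefficients, so its value at $\Phi_\alpha(x)$ is $\alpha$ times its value at $x$.

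To conclude, let $V(P,E,\delta)$ denote the optimal value. Applying $\Phi_\alpha$ to an optimal solution of the original problem gives $V(\alpha P,\alpha E,\alpha\delta)\ge\alpha V(P,E,\delta)$. Applying $\Phi_{1/\alpha}$ to an optimal solution of the scaled problem gives the reverse inequality, so the two values are equal. Since $\Phi_\alpha$ maps feasible points to feasible points and multiplies the objective by $\alpha$, it carries the optimal set onto the optimal set, which is the statement about optimal decisions. Existence of an optimum follows from compactness of the feasible set: it is a finite union of polytopes, one for each choice of award levels and charge/discharge pattern, and the objective is linear. I would note that if the optimum is not unique, the claim is understood as ``the optimal set scales by $\alpha$''.

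The main obstacle is not any single inequality but making sure the closing conditions described only verbally are of the homogeneous form. This is the place where a non-homogeneous constraint could hide, for example a fixed MWh cycle cap or an absolute minimum bid. I would state explicitly that $\kappa$ is counted in equivalent full cycles and $\sigma$ is a fraction of $E$. These choices are exactly what make those constraints scale with $E$.
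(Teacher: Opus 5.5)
Your proposal is correct and is essentially the paper's own proof. The map $(c,d,e,r)\mapsto(\alpha c,\alpha d,\alpha e,\alpha r)$ is a bijection between the two feasible sets, because every constraint is positively homogeneous of degree one in the variables and $(P,E,\delta)$; this includes the cycle cap $\kappa E$, the boundary condition $\sigma E$, the lattice and the mode exclusion. The linear objective then scales by $\alpha$. Your additions (the inverse map for the reverse inequality, existence of an optimum by compactness, and reading the claim as scaling of the optimal set) make explicit what the paper leaves implicit. One small correction: for $\delta=0$ the finite union of polytopes runs over charge/discharge patterns only, with $r$ ranging continuously within each polytope. Compactness holds regardless, since the feasible set is closed and bounded.
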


\begin{proof}
Map any feasible point $(c,d,e,r)$ to $(\alpha c,\alpha d,\alpha e,\alpha r)$.
Constraints \eqref{eq:soc}--\eqref{eq:energy}, the cycle cap $\kappa E$ and the
boundary condition $\sigma E$ are all positively homogeneous of degree one in
$(c,d,e,r,P,E)$, and $\alpha\delta\mathbb Z_+=\alpha\,(\delta\mathbb Z_+)$, so
the map is a bijection between the feasible sets. The objective
\eqref{eq:obj} is linear and therefore scales by $\alpha$. The operating-mode
exclusion is invariant because it constrains only which of $c_t,d_t$ is
nonzero.
\end{proof}

\begin{corollary}\label{cor:rho}
Conditional on the price path, product structure and other dimensionless
operating parameters, per-megawatt revenue $R$ depends on asset size only
through $E/P$ and $\rho=\delta/P$. Consequently a sweep of installed power at a
fixed increment is identical to a sweep of the increment at a fixed power under
those fixed conditions.
\end{corollary}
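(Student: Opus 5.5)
The plan is to derive the corollary from Proposition~\ref{prop:scale} by choosing the scaling factor so that the scaled asset has unit power. Write $V(P,E,\delta)$ for the optimal value of \eqref{eq:obj}--\eqref{eq:lattice}. Everything else is held fixed: prices $\pi_t,\lambda_b$, the product partition, $\eta$, $h$, $\kappa$, $\sigma$. The per-megawatt revenue is then $R=V(P,E,\delta)/P$, up to the annualisation constant, which depends only on the sample calendar.

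\textbf{Step 1: normalise.} Apply Proposition~\ref{prop:scale} with $\alpha=1/P$, which is admissible since $P>0$. This gives $V(P,E,\delta)=P\,V(1,E/P,\delta/P)$, and hence
\begin{equation*}
R=V\bigl(1,E/P,\rho\bigr),
\end{equation*}
a function of $(E/P,\rho)$ alone.

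\textbf{Step 2: the boundary case.} The case $\delta=0$ needs a separate check, because then $\rho=0$ is read as the continuous domain. Scaling $[0,P]$ by $\alpha$ gives $[0,\alpha P]$, so the bijection in the proof of Proposition~\ref{prop:scale} carries over unchanged. Note also that the cycle cap and the boundary condition enter through $\kappa E$ and $\sigma E$. After normalisation these become $\kappa E/P$ and $\sigma E/P$, so they depend only on $E/P$ and the dimensionless parameters. The same holds for the endurance terms $r_bh/\eta$ in \eqref{eq:energy}, since $h$ is a time and does not scale.

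\textbf{Step 3: the two sweeps.} Consider two experiments at fixed duration $E/P$:
\begin{itemize}
\item varying $P$ at fixed $\delta$, which traces $\rho=\delta/P$;
\item varying $\delta$ at fixed $P$, which traces $\rho=\delta/P$ over the same range.
\end{itemize}
By Step~1 both produce the same function $\rho\mapsto V(1,E/P,\rho)$. Any pair of points $(P,\delta)$ and $(P',\delta')$ with equal ratio therefore yields identical $R$.

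\textbf{Main obstacle.} The only real care is bookkeeping: every parameter that carries units of power or energy must scale together. In particular, $\delta$ must scale, which is what the proposition allows. Quantities measured in time ($h$, $\Delta t_t$, $T_b$) and the prices must not scale. I would verify by inspecting each constraint that no hidden absolute megawatt quantity remains, such as a fixed minimum bid distinct from $\delta$. The statement excludes that case by conditioning on the formulation \eqref{eq:lattice}.
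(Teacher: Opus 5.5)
Your proposal is correct and follows the paper's route: the corollary is stated as an immediate consequence of Proposition~\ref{prop:scale}, and normalising with $\alpha=1/P$ to obtain $R=V(1,E/P,\rho)$ is exactly the intended derivation. Your separate check that at $\delta=0$ the continuous domain $[0,P]$ maps to $[0,\alpha P]$, and your requirement that the power sweep hold $E/P$ fixed, are useful points that the paper leaves implicit.
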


For the French 1~MW increment, a 1~MW asset sits at $\rho=1$ and a 5~MW asset
at $\rho=0.2$. The parametrisation makes asset sizes comparable within a fixed
market specification; it does not make markets with different prices, endurance
or product rules equivalent.

\subsection{Rounding bounds the error at first order}

\begin{proposition}[Rounding bound]\label{prop:round}
For every $\rho>0$,
\begin{equation}
  0\ \le\ R(0)-R(\rho)\ \le\ \rho\,\Lambda ,
  \label{eq:bound}
\end{equation}
where $\Lambda$ is defined in \eqref{eq:lambda} and both sides are expressed
per megawatt of installed power.
\end{proposition}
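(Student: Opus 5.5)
The plan is to prove the two inequalities in \eqref{eq:bound} separately, each by exhibiting a feasible point. The left inequality is immediate: for $\delta>0$ the lattice domain $\delta\mathbb Z_+\cap[0,P]$ is a subset of $[0,P]$, which is the domain for $\delta=0$. Every point feasible for the lattice problem is therefore feasible for the continuous problem, with the same objective value. Dividing by $P$ and annualising preserves the ordering, so $R(0)\ge R(\rho)$.

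For the right inequality we round down. Take an optimal continuous solution $(c^*,d^*,e^*,r^*)$ and keep the entire energy schedule $(c^*,d^*,e^*)$ unchanged. Replace each award by $\tilde r_b=\delta\lfloor r_b^*/\delta\rfloor$, so that $\tilde r_b\in\delta\mathbb Z_+\cap[0,P]$ and $0\le r_b^*-\tilde r_b<\delta$.

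We then check feasibility of $(c^*,d^*,e^*,\tilde r)$ constraint by constraint.
\begin{itemize}
\item Constraints \eqref{eq:soc} and \eqref{eq:box}, mutual exclusivity, the cycle cap and the boundary condition do not involve $r$, so they still hold.
\item Constraint \eqref{eq:power} is $|d_t-c_t|\le P-r_b$. Its right-hand side is nonincreasing in $r_b$, so lowering $r_b$ only relaxes it.
\item Constraint \eqref{eq:energy} has a lower bound $r_bh/\eta$ that is nondecreasing in $r_b$ and an upper bound $E-r_bh\eta$ that is nonincreasing in $r_b$ (since $h,\eta>0$). Lowering $r_b$ therefore enlarges the admissible interval for $e_{t-1}$ and $e_t$.
\end{itemize}
This monotonicity check is the crux. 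It is where one must confirm that every constraint involving $r$ is \emph{downward closed} in $r$. The argument would fail if any constraint imposed a lower bound on $r$ in terms of the schedule; none does, so the rounded point is feasible for the lattice problem.

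It remains to compare objective values. The energy term is identical for the two points. The capacity term drops by
\begin{equation*}
\sum_b\lambda_b(r_b^*-\tilde r_b)T_b\le\delta\sum_b\lambda_bT_b=\delta\Lambda_{\mathrm{abs}},
\end{equation*}
which uses $\lambda_b\ge0$ and $r_b^*-\tilde r_b<\delta$. Here $\Lambda_{\mathrm{abs}}$ denotes the un-annualised sum, and the bound holds day by day. Because the lattice optimum is at least the value of this feasible point, summing over days gives
\begin{equation*}
P\,R(0)-P\,R(\rho)\le\delta\Lambda,
\end{equation*}
where both sides are annualised identically. Dividing by $P$ gives $R(0)-R(\rho)\le(\delta/P)\Lambda=\rho\Lambda$.

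The only bookkeeping to watch is the normalisation. $\Lambda$ is defined per megawatt, and annualisation is a common linear factor on both sides, so it cancels consistently. Proposition~\ref{prop:scale} is not needed for the bound itself. It only justifies writing the result as a function of $\rho$ alone.
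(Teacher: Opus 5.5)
Your proof is correct and follows the paper's argument essentially step for step. The left inequality comes from containment of the lattice feasible set. The right inequality comes from rounding the continuous award down to $\delta\lfloor r^\star_b/\delta\rfloor$ with the energy schedule held fixed, using that \eqref{eq:power} and \eqref{eq:energy} only relax as $r_b$ decreases and that $\lambda_b\ge0$. Your non-strict estimate $\sum_b\lambda_bT_b(r^\star_b-\tilde r_b)\le\delta\Lambda$ is in fact slightly more careful than the paper's strict version, which tacitly requires $\Lambda>0$.
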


\begin{proof}
The left inequality holds because the lattice feasible set is contained in the
continuous one. For the right inequality, let $(c^\star,d^\star,e^\star,
r^\star)$ be optimal for $\delta=0$ and set
$\bar r_b=\delta\lfloor r^\star_b/\delta\rfloor$. Each of $r_b$'s constraints,
\eqref{eq:power} and \eqref{eq:energy}, is relaxed when $r_b$ decreases, and no
other constraint involves $r_b$; hence $(c^\star,d^\star,e^\star,\bar r)$ is
feasible with the same energy schedule. Since $\bar r_b\le r^\star_b\le P$ and
$\bar r_b\in\delta\mathbb Z_+$, we have $\bar r_b\le\delta\lfloor
P/\delta\rfloor$, so $\bar r$ satisfies \eqref{eq:lattice}. The objective loses
only capacity revenue, and $r^\star_b-\bar r_b<\delta$ for every $b$, so
before normalisation, the objective loss is at most
$\sum_b\lambda_b T_b(r^\star_b-\bar r_b)<\delta\Lambda$. Dividing by $P$
gives $R(0)-R(\rho)<\rho\Lambda$ per megawatt.
\end{proof}

Three readings of \eqref{eq:bound} matter. As a modelling statement, the
valuation error of a continuous relaxation is $O(\rho)$: its upper envelope
vanishes as the inverse of asset size, and a modeller can bound it before
solving anything, since $\Lambda$
is a price aggregate. As a market-design statement, the worst-case incidence of
a uniform increment falls as $1/P$; the realised loss remains sawtoothed. As a
commercial statement, $R(0)-R(\rho)$ is a benchmark for the gross value of
pooling under this model, and $\rho\Lambda$ is its upper bound.

The bound is loose by construction, because it holds the energy schedule fixed
while the lattice-constrained optimum is free to re-optimise it. Its role is to
fix the order in $\rho$, not to predict the level; Section~\ref{sec:results}
reports how loose it is on real prices.

\subsection{Above \texorpdfstring{$\rho=1$}{rho = 1} the bias is an identity, not a measurement}

\begin{proposition}[Eligibility]\label{prop:elig}
If $\rho>1$ then $r_b=0$ for every $b$, so $R(\rho)=V(0)$, the pure-arbitrage
optimum, and
\begin{equation}
  \beta(\rho)=\frac{V(P)-V(0)}{V(0)}
  \qquad\text{for all }\rho>1 .
  \label{eq:elig}
\end{equation}
\end{proposition}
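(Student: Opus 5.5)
Since $\rho>1$ means $\delta>P$, the lattice constraint \eqref{eq:lattice} reads $r_b\in\delta\mathbb Z_+\cap[0,P]=\{0\}$, because the smallest positive lattice point is $\delta$, which already exceeds $P$. The proof therefore begins with this one-line observation: every feasible point of the $\delta$-problem has $r_b=0$ for all $b$.

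With $r\equiv 0$, constraint \eqref{eq:power} reduces to $-P\le d_t-c_t\le P$, which is implied by the box constraints \eqref{eq:box}. Constraint \eqref{eq:energy} reduces to $0\le e_{t-1},e_t\le E$, which is again implied by \eqref{eq:box} together with the boundary condition. The capacity term of \eqref{eq:obj} vanishes. The feasible set in $(c,d,e)$ is then exactly that of the pure-arbitrage problem (same dynamics \eqref{eq:soc}, mode exclusion, cycle cap and boundary condition), and so are the objectives. The optimal values coincide, and after dividing by $P$ and annualising, $R(\rho)=V(0)$.

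For \eqref{eq:elig}, substitute into the definition \eqref{eq:beta}. The numerator uses $R(0)$, the continuous-award value, which the statement denotes $V(P)$ (the optimum when any award up to the full connection $P$ is allowed continuously). This gives $\beta(\rho)=(V(P)-V(0))/V(0)$, which is independent of $\rho$ on $(1,\infty)$; that independence is the "identity" in the subsection title. The only point needing care, and the closest thing to an obstacle, is the notation. I would state explicitly that $V(x)$ denotes the per-megawatt optimum with continuous award domain $[0,x]$, so that $V(0)$ is arbitrage only and $V(P)=R(0)$. Well-definedness of the ratio also requires $V(0)>0$, which holds whenever arbitrage is profitable, since $V(0)\ge 0$ by the idle schedule. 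I would flag this as an implicit assumption rather than prove it.
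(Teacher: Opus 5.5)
Your proof is correct and follows the paper's argument: $\delta>P$ gives $\lfloor P/\delta\rfloor=0$, so the lattice collapses to $\{0\}$ and the program reduces to pure arbitrage. Substituting $R(0)=V(P)$ and $R(\rho)=V(0)$ into \eqref{eq:beta} then gives a $\rho$-independent ratio. Your additional checks are points the paper leaves tacit: the redundancy of \eqref{eq:power}--\eqref{eq:energy} at $r\equiv0$, the per-megawatt reading of $V$ (which is introduced only later, in the subsection on the lower bound), and the implicit requirement $V(0)>0$.
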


\begin{proof}
$\lfloor P/\delta\rfloor=0$ when $\delta>P$, so \eqref{eq:lattice} forces
$r_b=0$ and the program reduces to pure arbitrage. The right-hand side of
\eqref{eq:elig} then does not depend on $\rho$.
\end{proof}

Proposition~\ref{prop:elig} matters mainly as a warning about how to report. The
quantity in \eqref{eq:elig} is the value of access to a market divided by the
value of not having it, and it is constant across the whole ineligible region. It
is therefore not a measurement of granularity: it does not vary with the
increment, and it can be made arbitrarily large by choosing an asset whose
standalone arbitrage value is small. It also decomposes cleanly. Writing the
continuous optimum as arbitrage plus capacity, \eqref{eq:elig} becomes gross
capacity revenue minus the arbitrage the relaxation had to give up, all divided
by standalone arbitrage revenue. In the French sample, an ineligible two-hour
asset shows \SubEligCapacity{} k\eur{}/MW/year of capacity revenue against
\SubEligDisplaced{} k\eur{}/MW/year of displaced arbitrage, on a standalone base
of \SubEligStandalone{}, hence \SubEligBeta\%. We report the decomposition rather
than the headline ratio, because the ratio is a definition and the decomposition
is not.

\subsection{Part of the connection is structurally unsellable}

\begin{proposition}[Utilisation defect]\label{prop:defect}
The largest award the lattice permits is $\delta\lfloor P/\delta\rfloor
= P\rho\lfloor1/\rho\rfloor$. The unsellable fraction of the connection,
\begin{equation}
  u(\rho)=1-\rho\lfloor1/\rho\rfloor,
  \label{eq:defect}
\end{equation}
satisfies $u(\rho)=0$ if and only if $1/\rho$ is an integer, and
$\sup_{0<\rho\le1}u(\rho)=1/2$, approached as $\rho\downarrow1/2$.
\end{proposition}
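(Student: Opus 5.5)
The plan is to treat the three claims separately, all resting on the substitution $x=1/\rho=P/\delta$ and the fractional-part identity $x-\lfloor x\rfloor=\{x\}$.

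\textbf{Largest award.} From \eqref{eq:lattice}, $r_b=k\delta$ with $k\in\mathbb Z_+$ and $k\delta\le P$. Hence $k\le P/\delta$, and the largest admissible integer is $k=\lfloor P/\delta\rfloor$. Since $P/\delta=1/\rho$, we get $\delta\lfloor P/\delta\rfloor=P\rho\lfloor1/\rho\rfloor$, which is the first claim.

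\textbf{Zero set.} Dividing the maximal award by $P$ gives
\[
u(\rho)=1-\frac{\lfloor x\rfloor}{x}=\frac{\{x\}}{x}.
\]
Since $x>0$, $u(\rho)=0$ exactly when $\{x\}=0$, that is, when $1/\rho$ is an integer.

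\textbf{Supremum.} Restricting to eligible assets, $0<\rho\le1$ means $x\ge1$. I would split by $n=\lfloor x\rfloor\ge1$ and write $x=n+f$ with $f\in[0,1)$. On each piece,
\[
u=\frac{f}{n+f}=1-\frac{n}{n+f},
\]
which is increasing in $f$. So on $[n,n+1)$ we have $u<1/(n+1)$, and $u\to 1/(n+1)$ as $f\uparrow1$, i.e.\ as $x\uparrow n+1$.

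The envelope $1/(n+1)$ is largest at $n=1$, where it equals $1/2$. Therefore $u<1/2$ on all of $(0,1]$, while $u\to1/2$ as $x\uparrow2$, which is $\rho\downarrow1/2$. This shows $\sup u=1/2$, not attained: at $\rho=1/2$ exactly, $u=0$.

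In terms of asset size, $x\uparrow2$ is an asset sized just below twice the increment, matching the text.

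\textbf{Main obstacle.} There is little difficulty here. The one point needing care is the direction of the limit together with non-attainment. The bound $u<1/2$ must be strict on every piece, including the boundary case $x=1$, where $u=0$.
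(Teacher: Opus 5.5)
Your proof is correct and follows the paper's argument. Both decompose by $n=\lfloor 1/\rho\rfloor$, show that $u$ is monotone on each branch with supremum $1/(n+1)$, and note that this envelope is largest at $n=1$. You work in $x=1/\rho$ with $u=\{x\}/x$, while the paper uses $u=1-n\rho$ directly on $\rho\in(1/(n+1),1/n]$; your explicit handling of the zero set and of non-attainment at $\rho=1/2$ spells out steps the paper leaves implicit.
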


\begin{proof}
Feasibility of \eqref{eq:lattice} requires $r_b/\delta$ to be a nonnegative
integer no greater than $P/\delta$, whose maximum is $\lfloor P/\delta\rfloor$.
On the branch $\rho\in(1/(n{+}1),1/n]$ we have $\lfloor1/\rho\rfloor=n$ and
$u(\rho)=1-n\rho$, which decreases from $1/(n{+}1)$ to $0$. The branch suprema
$1/(n{+}1)$ are maximal at $n=1$.
\end{proof}

Proposition~\ref{prop:defect} converts an empirical curiosity into a design
rule. The worst structural sizing on a lattice of step $\delta$ is a connection
power just below $2\delta$: nearly half of it cannot be offered as capacity.
An exact integer multiple of $\delta$ removes this utilisation defect, but not
necessarily the residual integrality gap. Whether changing connection size is
economical lies outside this model.

\subsection{At \texorpdfstring{$\rho=1$}{rho = 1} capacity and energy become exclusive}

\begin{proposition}[Eviction]\label{prop:evict}
If $r_b=P$, then $d_t-c_t=0$ for every $t\in b$, and product $b$ contributes
exactly zero arbitrage revenue whatever the prices.
\end{proposition}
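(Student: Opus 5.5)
The argument is a direct substitution into the power-headroom constraint \eqref{eq:power}, followed by the observation that the arbitrage term of the objective \eqref{eq:obj} is linear in the net position $d_t-c_t$. No earlier proposition is needed beyond the formulation itself; the point is only that the symmetric reserve constraint pinches the admissible net position to a single value when the award equals the connection.

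\textbf{Step 1.} Fix a product $b$ with $r_b=P$ and any $t\in b$. Substituting into \eqref{eq:power} gives
\begin{equation*}
  -P+P\ \le\ d_t-c_t\ \le\ P-P, \qquad\text{that is,}\qquad 0\le d_t-c_t\le 0 ,
\end{equation*}
so $d_t=c_t$ for every $t\in b$.

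\textbf{Step 2.} Invoke the mutual-exclusion condition on charge and discharge to conclude $c_t=d_t=0$. This is a strengthening the proposition does not need, but it pins down the schedule: the asset is idle in energy terms throughout $b$. The state-of-charge recursion \eqref{eq:soc} then gives $e_t=e_{t-1}$ on $b$, so the stored energy is frozen.

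\textbf{Step 3.} The arbitrage revenue attributable to product $b$ is
\begin{equation*}
  \sum_{t\in b}\pi_t(d_t-c_t)\Delta t_t .
\end{equation*}
Each summand has the factor $d_t-c_t=0$, so the sum is exactly zero for any price path $(\pi_t)$, including negative prices.

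\textbf{Main obstacle.} The only point needing care is the phrase ``zero arbitrage revenue''. If one ignores mutual exclusion, simultaneous charge and discharge with $c_t=d_t>0$ is feasible, but it still yields zero revenue because the objective counts only the net position $d_t-c_t$. The conclusion therefore does not depend on Step 2.

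I would also remark that the statement concerns only the energy \emph{within} $b$. Intervals outside $b$, and hence the day's overall arbitrage, are affected only indirectly, through the frozen state of charge and the energy-headroom constraint \eqref{eq:energy}. By Proposition~\ref{prop:defect}, $r_b=P$ is attainable on the lattice only when $1/\rho$ is an integer, and in particular at $\rho=1$, where it is the only positive award available.
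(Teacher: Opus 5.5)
Your proof is correct and is the same as the paper's: substituting $r_b=P$ into \eqref{eq:power} forces $d_t-c_t=0$ on $b$, and the arbitrage term of \eqref{eq:obj} depends only on $d_t-c_t$. Your Step~2 (mutual exclusion forcing $c_t=d_t=0$ and a frozen state of charge) is a valid strengthening, and as you note the conclusion does not depend on it.
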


\begin{proof}
Substituting $r_b=P$ into \eqref{eq:power} gives $0\le d_t-c_t\le0$. The
arbitrage term of \eqref{eq:obj} is a function of $d_t-c_t$ alone.
\end{proof}

The smallest eligible asset, $\rho=1$, has exactly two options in every product:
award nothing and trade energy freely, or award its full power and abstain from
the energy market for the entire product. The coarse lattice therefore does not
merely round the capacity quantity; it destroys the interior of the trade-off
between the two revenue streams. This is why, as Section~\ref{sec:results}
shows, most of the measured gap at $\rho=1$ is lost arbitrage rather than lost
capacity, and it is the reason the gap does not scale with the capacity price
alone.

\subsection{A lower bound from the shadow value of the award cap}

Proposition~\ref{prop:round} bounds the loss from above. The complementary
question, how much of the loss is unavoidable, has an equally short answer, and
it uses a quantity the continuous solve already produces.

Let $V(\kappa)$ be the optimal value of the continuous problem when the award is
capped at $r_b\le\kappa$ instead of $r_b\le P$, so that $V(P)$ is the continuous
optimum and $V(0)$ is the pure-arbitrage optimum. Let $\nu\ge0$ be a
supergradient of $V$ at $\kappa=P$, that is, the marginal annual revenue of one
additional megawatt of permitted award. $\nu$ is a dual variable of the same
linear program and costs nothing extra to obtain.

\begin{proposition}[Unavoidable loss]\label{prop:lower}
Assume the continuous program is a linear program, that is, the operating-mode
exclusion is slack at its optimum. Then for every $\rho>0$,
\begin{equation}
  R(0)-R(\rho)\ \ge\ u(\rho)\,\nu ,
  \label{eq:lower}
\end{equation}
per megawatt of installed power, with $u$ as in \eqref{eq:defect}.
\end{proposition}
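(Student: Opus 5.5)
The plan is to compare the lattice problem with the continuous problem whose award cap is lowered to the largest lattice point, and then use concavity of $V$ to turn the size of that cap reduction into a revenue loss.

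\emph{Step 1: the lattice problem is a restriction of a capped continuous problem.} By Proposition~\ref{prop:defect}, every award feasible under \eqref{eq:lattice} satisfies $r_b\le \delta\lfloor P/\delta\rfloor = P(1-u(\rho))$. So the lattice feasible set is contained in the continuous feasible set with cap $\kappa = P(1-u(\rho))$. This gives, before normalisation,
\[
  P\,R(\rho)\ \le\ V\bigl(P(1-u(\rho))\bigr).
\]
When $\rho>1$ the cap is $\kappa=0$, and the same inequality holds (in fact with equality, by Proposition~\ref{prop:elig}).

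\emph{Step 2: $V$ is concave in the cap.} Under the stated assumption the continuous program is a linear program. Its right-hand sides, namely the cap $r_b\le\kappa$, enter linearly. The optimal value of a maximisation LP is concave in its right-hand side: take optimal points for $\kappa_1$ and $\kappa_2$; their convex combination is feasible for the combined cap because every constraint is linear. Hence $V$ is concave on $[0,P]$. A supergradient $\nu$ at $\kappa=P$ therefore satisfies
\[
  V(\kappa)\le V(P)+\nu(\kappa-P)\quad\text{for all }\kappa\in[0,P].
\]
Here $\nu$ is measured per MW of cap, in the same annualised units. Applying this with $\kappa=P(1-u)$ gives
\[
  V\bigl(P(1-u)\bigr)\le V(P)-\nu P u .
\]

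\emph{Step 3: combine and normalise.} Since $V(P)=P\,R(0)$, Steps~1 and~2 give $P R(0)-P R(\rho)\ge \nu P u(\rho)$. Dividing by $P$ yields \eqref{eq:lower}.

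The main obstacle is the concavity step, and specifically what happens at $\kappa=P$. Two points need care there.

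\begin{itemize}
\item The cap $r_b\le\kappa$ is redundant with \eqref{eq:power} at $\kappa=P$, since that constraint already implies $r_b\le P$. A supergradient in the direction of \emph{decreasing} $\kappa$ is still meaningful because $V$ is defined on $[0,P]$. I would state concavity on that closed interval and read $\nu$ as any element of the superdifferential there, i.e.\ any valid left-derivative bound. This also explains why $\nu\ge0$ is consistent: $V$ is nondecreasing in $\kappa$.
\item The mode-exclusion condition must be handled honestly. The assumption only says it is slack at the optimum for $\kappa=P$. For concavity on all of $[0,P]$, I would instead define $V$ as the value of the LP relaxation that drops the exclusion constraint. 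That LP value is concave and upper-bounds the true capped value, so Step~1 remains valid: the lattice value is at most the relaxed LP value at the reduced cap. The assumption is then needed only to identify the relaxed $V(P)$ with $P\,R(0)$.
\end{itemize}
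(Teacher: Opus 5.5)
Your proof is correct and follows the paper's argument. You bound the lattice value by the continuous value at the reduced cap $(1-u)P$, then apply concavity of the LP value function in that cap, with the supergradient $\nu$ at $\kappa=P$. Your extra step, defining $V$ as the LP-relaxation value (exclusion dropped) at every cap and using the slackness assumption only to identify $V(P)$ with $P\,R(0)$, is a genuine tightening: the paper applies concavity to $V$ on all of $[0,P]$, while its hypothesis only guarantees that the exclusion is slack at $\kappa=P$.
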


\begin{proof}
Any solution feasible for the lattice problem satisfies
$r_b\le\delta\lfloor P/\delta\rfloor=(1-u)P$ and is feasible for the continuous
problem with that cap, so $R(\rho)\le V((1-u)P)$. The value of a linear program
is a concave function of its right-hand side, and $\kappa$ enters only as a
right-hand side, so $V((1-u)P)\le V(P)+\nu\bigl((1-u)P-P\bigr)=V(P)-\nu uP$.
Combining and dividing by $P$ gives \eqref{eq:lower}.
\end{proof}

Together, \eqref{eq:bound} and \eqref{eq:lower} sandwich the loss between a term
driven by the unsellable fraction and a term driven by the increment:
\begin{equation}
  u(\rho)\,\nu\ \le\ R(0)-R(\rho)\ \le\ \rho\,\Lambda .
  \label{eq:sandwich}
\end{equation}
The two ends have different economics. The upper bound is a price aggregate and
is known before optimising. The lower bound is a shadow price and is known after
one continuous solve, which is the solve a modeller was going to run anyway. The
practical reading of \eqref{eq:sandwich} is that a modeller who has solved the
easy problem already possesses a certificate on the error of having solved only
the easy problem. The certificate is void on any day where the operating-mode
exclusion binds, because the value function of a mixed-integer program need not
be concave in its right-hand side.

\subsection{Several products, several lattices}

Nothing above uses the fact that there is one capacity product family. If the
asset can sell $K$ products indexed by $k$, each on its own lattice $\delta_k$
and each with its own price integral $\Lambda_k$, the rounding argument applies
coordinate by coordinate, because rounding any single award down still relaxes
every constraint it appears in. The bound becomes
\begin{equation}
  R(0)-R(\rho_1,\dots,\rho_K)\ \le\ \sum_k \rho_k\Lambda_k ,
  \label{eq:multi}
\end{equation}
with $\rho_k=\delta_k/P$. The eviction mechanism of
Proposition~\ref{prop:evict} also survives, and it becomes sharper: with several
products competing for the same connection, a full pledge on the coarsest
lattice excludes the asset not only from the energy market but from the finer
capacity products as well. A market that refines the increment of one product
while leaving another coarse therefore does not necessarily help the small
participant, because the binding lattice is the coarsest one that the asset
wants to use. We do not test \eqref{eq:multi} here; the French FCR case is
$K=1$, and a multi-product test would require reserve price series whose
settlement conventions we have not audited to the same standard.

\subsection{The bias is not monotone in asset size}

\begin{proposition}[Divisibility monotonicity]\label{prop:mono}
If $\delta'$ divides $\delta$ then $R(\delta/P)\le R(\delta'/P)$. Without a
divisibility relation, no ordering holds in general.
\end{proposition}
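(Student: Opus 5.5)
The first claim is a feasible-set inclusion, in the same spirit as the left inequality of Proposition~\ref{prop:round}. Hold $P$, $E$ and all other data fixed. Suppose $\delta=m\delta'$ with $m$ a positive integer. Then
\[
\delta\mathbb Z_+=\{m\delta' k:k\in\mathbb Z_+\}\subseteq\delta'\mathbb Z_+,
\]
and intersecting both sides with $[0,P]$ preserves the inclusion. Every other constraint, \eqref{eq:soc}--\eqref{eq:energy} together with the mode exclusion, cycle cap and boundary conditions, is identical in the two programs. So every point feasible for the lattice $\delta$ is feasible for the lattice $\delta'$ and earns the same objective. Taking the maximum gives $R(\delta/P)\le R(\delta'/P)$.

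Two edge cases need a line each. The case $\delta'=0$ (the continuous problem) is included if we read "$0$ divides $\delta$" through the convention $\delta=0$ in \eqref{eq:lattice}; in that case the statement is just the left inequality of Proposition~\ref{prop:round}. Ineligible sizes, $\delta>P$, are also covered, since both sets still contain $r\equiv0$.

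The second claim is a non-existence statement, so I would prove it with one explicit counterexample. I would look for $\delta'<\delta$, with neither dividing the other, such that the coarser lattice is strictly better: $R(\delta/P)>R(\delta'/P)$. The utilisation defect of Proposition~\ref{prop:defect} supplies the mechanism. Take $P=1$, $\delta=1/2$ and $\delta'=2/5$. The coarse lattice allows a maximum award of $1$, so $u=0$; the finer lattice allows only $4/5$, so $u=1/5$.

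Now pick data under which full capacity is optimal. Use a single product spanning the whole day, $E$ large enough that \eqref{eq:energy} is slack (for instance $\sigma=1/2$, $h$ small relative to $E$), and a constant energy price so that arbitrage is worthless. The objective then reduces to $\lambda r T$ with $\lambda>0$. The two optima are $\lambda T\cdot1$ and $\lambda T\cdot\tfrac45$, which gives the strict reversal. I would also verify that $r=P$ is feasible, with $c=d=0$ satisfying \eqref{eq:power} and the energy window nonempty.

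For completeness I would add a pair ordered the other way, such as $\delta=1$ versus $\delta'=2/3$ at $P=1$. Here the finer lattice already beats the coarser one (awards $1$ versus $2/3$, or for a strict example $\delta'=1/3$), so neither ordering holds universally. This also ties the counterexample to the non-monotonicity in asset size by Corollary~\ref{cor:rho}: fixing $\delta$ and varying $P$ produces the same sawtooth.

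The main obstacle is keeping the counterexample honest with respect to the auxiliary constraints. The energy headroom \eqref{eq:energy}, the return-to-$\sigma E$ condition and the cycle cap must all admit $r_b=P$. I would first choose $\sigma=1/2$ and $E/P$ large, then check that the interval $[Ph/\eta,\,E-Ph\eta]$ contains $\sigma E$.
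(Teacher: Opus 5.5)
Your proof is the paper's proof. The inclusion $\delta\mathbb Z_+\subseteq\delta'\mathbb Z_+$ for nested steps gives the first claim. The zero-energy-value instance with $P=1$, $\delta=1/2$, $\delta'=2/5$ (maximum awards $1$ versus $4/5$) gives the reversal. Your explicit check that $r_b=P$ fits the energy window, the boundary condition and the cycle cap is care the paper leaves implicit.

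Two details need repair.

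First, your ``completeness'' paragraph does not show what it claims. At $P=1$ the step $\delta=1$ sells $1$, while the finer $\delta'=2/3$ sells only $2/3$. That pair is therefore ordered the \emph{same} way as your main counterexample: the coarser step wins. Your alternative $\delta'=1/3$ divides $1$, so it is a nested pair covered by the first claim, and it ties at $1$ in your instance. The paper is content with the single reversal, because that already refutes the natural ordering. If you also want a non-nested pair where the finer step strictly wins, use $\delta=0.6$, $\delta'=0.5$ at $P=1$ in the same instance (awards $0.6$ versus $1$).

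Second, ``constant energy price'' should be zero, as in the paper, or at least a nonnegative constant. With a negative constant price and $\eta<1$, a cycle that returns to $\sigma E$ earns $|\pi|(1-\eta^2)$ per MWh charged. Arbitrage is then not worthless, and the objective no longer reduces to $\lambda r T$.
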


\begin{proof}
If $\delta'\mid\delta$ then $\delta\mathbb Z_+\subseteq\delta'\mathbb Z_+$, so
the feasible set for $\delta$ is contained in that for $\delta'$. For the second
claim, take $P=1$, zero energy-market value and a positive capacity price. Then
revenue is proportional to $\delta\lfloor1/\delta\rfloor$. The coarser step
$\delta=0.5$ sells the full connection, whereas the finer but non-nested step
$\delta'=0.4$ sells only $0.8P$. Thus neither numerical step size nor $\rho$
orders non-nested feasible sets.
\end{proof}

Two practical consequences follow. Reporting $\beta$ on a grid of asset sizes
gives a lower bound on the worst case, never the worst case itself, unless the
grid contains the local maxima that Proposition~\ref{prop:defect} locates just
above the reciprocals of the integers. And the common intuition that a bigger
asset is always closer to the continuous ideal is false pointwise; it is true
only for the envelope, by Proposition~\ref{prop:round}.

\section{Data and Experimental Design}

\subsection{Market data}

The common sample runs from 1 January 2024 to 31 July 2026 and contains 943
complete local days with exactly six capacity products each. French day-ahead
prices come from the Fraunhofer ISE Energy-Charts platform
\cite{energycharts2026}, with one upstream missing day filled from the same
French day-ahead series on the ENTSO-E Transparency Platform. Prices are hourly
until the Single Day-Ahead Coupling transition and quarter-hourly thereafter.
French FCR settlement capacity prices are parsed from the monthly official
workbooks distributed by regelleistung.net \cite{regelleistung2026}. Each quote
is stated per megawatt and per product; we divide by the elapsed UTC duration
rather than a nominal four hours, because the midnight product lasts three
hours at the spring clock change and five at the autumn change. One local day at
the end of the sample is dropped because the capacity series stops before the
price series does.

The study is ex post by construction. Prices are known perfectly, and the
price-taking quantity chosen at the observed clearing price is assumed awarded.
This removes forecast risk and award risk from the comparison, which is what
makes the comparison identify granularity alone, and it makes every level
reported here a historical upper bound rather than an operational forecast.

\subsection{Experiments}

Four families of runs, \LatticeCases{} cases in all, are solved on the same 943
days, all sharing prices, physics, endurance and the daily boundary condition.

\emph{Granularity sweep.} A dense grid of $\rho$ at a fixed 1~MW two-hour asset,
obtained by varying $\delta$ directly. This is the experiment
Corollary~\ref{cor:rho} says is the right one, and it resolves the local maxima
that a grid over installed power cannot see.

\emph{Homogeneity test.} The same $\rho$ realised at different absolute sizes,
which turns Proposition~\ref{prop:scale} into a falsifiable numerical claim
rather than an algebraic remark.

\emph{Duration interaction.} The granularity sweep repeated at one-hour and
four-hour durations, to check that the $\rho$ parametrisation is not an artefact
of the reference duration.

\emph{Robustness.} Every case at the reference point solved with two independent
solvers, and a block bootstrap over the 943 daily paired differences to attach
dispersion to $\beta$.

Table~\ref{tab:assumptions} lists the parameters held fixed throughout. Revenue
is annualised per installed megawatt over the common sample. Where an annual
figure is translated to a capital scale, we report $A\,\Delta R$ with
$A=(1-1.08^{-15})/0.08$, a fifteen-year eight-percent annuity factor. That is a
unit conversion of a revenue difference and not a project net present value:
capital cost, operating cost, degradation, taxes and residual value are
deliberately absent.

\begin{table}[t]
\centering
\caption{Parameters held fixed across all formulations.}
\label{tab:assumptions}
\begin{tabular}{ll}
\toprule
Parameter & Value \\
\midrule
Reference asset & 1 MW / 2 MWh \\
Round-trip efficiency & 85\% \\
Cycle cap $\kappa$ & 1.5 equivalent cycles/day \\
Daily boundary $\sigma$ & 50\% of $E$, start and end \\
Capacity products & six local 4 h blocks/day \\
Endurance $h$ & 0.25 h \\
Granularity ratio $\rho$ & swept, $0$ to $1$ \\
Information & perfect foresight \\
Sample & 943 local days, 2024--2026 \\
\bottomrule
\end{tabular}
\end{table}

\section{Results}\label{sec:results}

\subsection{Homogeneity holds numerically}

Proposition~\ref{prop:scale} predicts that per-megawatt revenue is invariant
under joint rescaling of $P$, $E$ and $\delta$. Four independent solves at
$\rho=0.5$ and two-hour duration span $P=0.5$ to $4$~MW. Their objectives agree to
\InvarianceMaxDev, so the observed spread is floating-point roundoff and not a
modelling difference. The proposition is not merely true on paper; the
implementation respects it, which is the strongest single check available on
this pipeline, because a sign or scaling error in the capacity constraints would
almost certainly break the invariance while leaving each individual solve
plausible. The four rows remain in the generated result file.

\subsection{The bias as a function of \texorpdfstring{$\rho$}{rho}}

Fig.~\ref{fig:rho} is the central result. It reports $\beta(\rho)$ on the dense
grid, the rounding bound of Proposition~\ref{prop:round} above it, and the
utilisation defect $u(\rho)$ of Proposition~\ref{prop:defect} below.

Two structural features are visible and both are predicted. The curve is a sawtooth
whose local minima sit exactly at $\rho=1/n$, where the defect vanishes and the
asset can pledge its whole connection. Within each branch the bias grows as
$\rho$ falls towards the next reciprocal from above, because the largest
feasible award shrinks while the connection does not. The largest bias on the grid
is \PeakBias\% at $\rho=\PeakRho$, an asset sized just below twice the
increment, which is where Proposition~\ref{prop:defect} says the defect is
worst. This is a prediction and not a fit: the grid was refined at $\rho$ slightly
above $1/2$ precisely because Proposition~\ref{prop:defect} says the defect
approaches its maximum of one half there, and the measured bias duly reaches its
own maximum at the same place. It remains formally a lower bound on the
supremum, since the supremum is approached and not attained. A grid over
installed power at a fixed
1~MW increment, as used in the earlier version of this study, would have
reported \MaxEligibleBias\% at \MaxEligiblePower~MW, less than two thirds of the
value found here, which is the practical content of
Proposition~\ref{prop:mono}. The price integral is
$\Lambda=\CapacityIntegral$~k\eur{}/MW/year over this sample, so the bound of
Proposition~\ref{prop:round} at $\rho=1$ is that same figure.

\begin{figure}[t]
\centering
\includegraphics[width=\columnwidth]{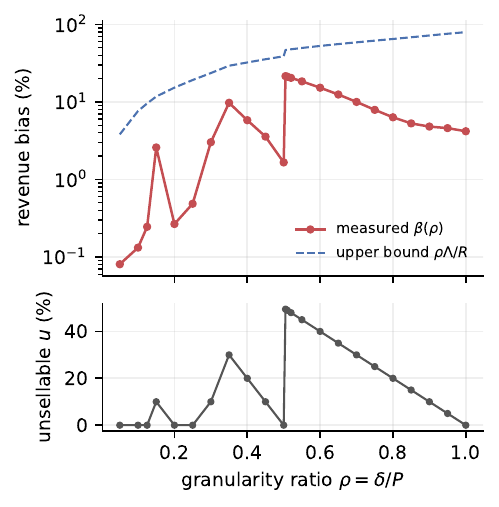}
\caption{Measured valuation bias $\beta(\rho)$ for a two-hour asset over 943
French days, on a logarithmic scale, with the rounding bound
$\rho\Lambda/R(\rho)$ of Proposition~\ref{prop:round}; below, the structurally
unsellable fraction $u(\rho)$ of Proposition~\ref{prop:defect}. Both curves fall
to a local minimum at every $\rho=1/n$; their proximity elsewhere is
descriptive, not an identity.}
\label{fig:rho}
\end{figure}

The full sweep is provided in \path{results/lattice_physical.csv}. Reading bias and
defect together separates two mechanisms. Where the defect is zero the
residual bias is small and comes only from the lumpiness of the award, never
exceeding what a single unclaimed increment can be worth. Where the defect is
large, it dominates: the asset owns power it is not allowed to offer, and no
re-optimisation recovers it.

The rounding bound is valid everywhere and loose by a factor of
\BoundTightness{} at the reference point $\rho=1$. That gap between bound and
realisation is itself informative: it measures how much the lattice-constrained
asset recovers by re-optimising its energy schedule, which the bound forbids by
construction.

\subsection{At \texorpdfstring{$\rho=1$}{rho = 1}, most of the loss is displaced arbitrage}

Table~\ref{tab:central} decomposes the reference case. The continuous relaxation
reports \ContinuousRevenue{} k\eur{}/MW/year against \IntegerRevenue{} on the
French 1~MW lattice, a difference of \AnnualBias{} k\eur{}/MW/year or
\RelativeBias\%. Capitalised at the stated annuity factor, that is
\CapitalisedBias{} k\eur{}/MW.

\begin{table}[t]
\centering
\caption{Reference case, $\rho=1$ (1 MW asset, 1 MW increment, 2 h duration).
The lattice costs more arbitrage revenue than capacity revenue.}
\label{tab:central}
\begin{tabular}{@{}lrrr@{}}
\toprule
& Continuous & Lattice $\rho=1$ & Difference \\
& (k\eur{}/MW/y) & (k\eur{}/MW/y) & (k\eur{}/MW/y) \\
\midrule
Energy arbitrage & \ContSpot & \IntSpot & \GapSpot \\
Capacity & \ContFcr & \IntFcr & \GapFcr \\
\midrule
Total & \ContinuousRevenue & \IntegerRevenue & \AnnualBias \\
\bottomrule
\end{tabular}
\end{table}

\begin{figure}[t]
\centering
\includegraphics[width=\columnwidth]{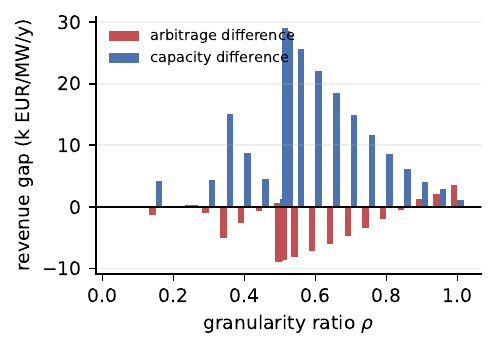}
\caption{Signed differences, continuous minus lattice, in the two accounting
lines. A negative bar means the lattice formulation earns more on that line;
the components are accounting differences, not separate causal effects.}
\label{fig:decomp}
\end{figure}

Fig.~\ref{fig:decomp} shows that the pattern is not specific to $\rho=1$.
Across the coarse part of the sweep the signed arbitrage difference is usually
larger. Some capacity differences are negative, so the two lines should not be
read as an additive causal attribution.

The decomposition is the empirical face of Proposition~\ref{prop:evict}.
Only \FcrShareOfGap\% of the gap is capacity revenue that the lattice made
unavailable. The remaining \SpotShareOfGap\% is energy arbitrage that the
lattice-constrained asset gave up in order to hold a full-power pledge, because
at $\rho=1$ a pledge is all or nothing and a full pledge freezes the net grid
position for four hours. The mean pledged fraction falls from
\ContPledge{} under a continuous award to \IntPledge{} on the lattice, and the
asset is worse off on both revenue lines at once. Any intuition that
sizes this effect by multiplying a capacity price by a rounded quantity is
therefore wrong by roughly a factor of four in this market.

\subsection{Dispersion}

A single price path yields a single realisation of $\beta$, so the point
estimate needs dispersion before it can be quoted. Because both formulations are
solved day by day on the same days, the annual gap is a sum of 943 paired daily
differences, and a moving-block bootstrap \cite{kunsch1989} over those
differences is available at no additional solver cost. With blocks of
\BootBlock{} days and \BootDraws{} resamples, the 95\% interval on $\beta$ at
$\rho=1$ is \BootLow--\BootHigh\%. Splitting the sample by calendar year gives
\BetaYearly, which shows that the estimate is not carried by a single period
even though capacity prices trend over the sample. The interval is conditional
on the observed path and the chosen 30-day block length; it is not a forecast
interval for another market regime.

\subsection{Duration and the endurance boundary}

At the five common grid points $\rho\in\{0,0.25,0.5,0.75,1\}$, one-, two- and
four-hour assets retain the same broad ordering while longer duration lowers the
measured level at the coarse points. This sparse sensitivity check does not
establish the full sawtooth away from the two-hour dense sweep.

One boundary case deserves separate treatment because it is where efficiency and
granularity interact. Combining the two sides of \eqref{eq:energy}, a nonzero
award of one increment requires
\begin{equation}
  E\ \ge\ \delta h(\eta+\eta^{-1}).
  \label{eq:minenergy}
\end{equation}
With $h=0.25$~h and $\eta=\sqrt{0.85}$, one megawatt of award requires
0.5017~MWh. A nominal 1~MW/0.5~MWh asset therefore misses eligibility by
1.7~kWh, one third of one percent of its nameplate energy, and its measured bias
is \HalfHourBias\%. We report this as a boundary artefact and not as a result:
it is a statement about a knife-edge, it would vanish for an asset of
0.505~MWh, and it is sensitive to the rounding convention used for endurance.
Section~\ref{sec:limits} returns to it.

\subsection{Solver and implementation checks}

Every case of the sweep is solved with HiGHS at a relative optimality gap of
$10^{-9}$, six orders of magnitude below the smallest effect reported here, so
the ordering $R(0)\ge R(\rho)$ cannot be a tolerance artefact. The reference
case is additionally solved with CBC, an independent code: the two objectives
agree to \SolverDevCentral. Both sides of every ratio use the same solver,
which the earlier version of this study did not guarantee at the 0.5~h point,
where both sides of \eqref{eq:energy} touch and CBC becomes numerically
degenerate on one negative-price day. Three assertions run at assembly time and
abort the pipeline if violated: every case uses exactly 943 days; the continuous
objective is never below its lattice counterpart; and no asset with $\rho>1$
receives a capacity award. No number in this manuscript is typed by hand: the
manuscript reads a single generated macro file.

\section{Implications}

\subsection{For modellers}

A continuous capacity award represents a \emph{pool}; for a standalone asset it
can be materially biased. Proposition~\ref{prop:round} gives an $O(\rho)$ upper
envelope, whose materiality still depends on $\Lambda$ and the revenue scale.
The cheap discipline is to report $\rho$ alongside every valuation. A
threshold on $\rho$ alone would be unsound, because Proposition~\ref{prop:mono}
forbids monotonicity: in our sweep the bias at $\rho=0.15$ exceeds the bias at
$\rho=0.5$. The defensible rule uses the two quantities the theory supplies.
Report both formulations whenever the connection is not an integer multiple of
the increment, that is whenever $u(\rho)>0$, and in any case whenever
$\rho\Lambda$ is material against the revenue at stake. Both tests are
arithmetic on published quantities, and their outcome is a transparent error bar
rather than a hidden assumption.

\subsection{For developers}

Proposition~\ref{prop:defect} supplies a design diagnostic: an integer multiple
of the award increment eliminates the structurally unsellable tail. It does not
eliminate every integrality loss, and changing connection size may itself cost
money. The theory brackets the remaining loss by $u(\rho)\nu$ and
$\rho\Lambda$; the French sweep reaches \PeakBias\% near the worst structural
sizing, just below twice the increment.

The measured gap has a second commercial reading. It is the gross value of
aggregation: the revenue a standalone asset forgoes purely because it cannot
reach the finer effective lattice a portfolio enjoys. At $\rho=1$ in this market
that is \AnnualBias{} k\eur{}/MW/year, and it is an upper bound on a rational
aggregation fee, before the aggregator's own forecast error, portfolio
constraints and margin.

\subsection{For market designers}

The upper envelope of the incidence is $O(\delta/P)$ per megawatt. A
single increment therefore weighs most heavily on small assets,
which makes bid granularity a distributional instrument whether or not it was
intended as one. Two design levers follow directly. Reducing $\delta$ shrinks
the envelope proportionally; the realised loss is guaranteed not to rise when
the new lattice nests the old one. Its effect can be bounded ex ante from the
price series through $\rho\Lambda$. Allowing
aggregation is the substitute lever: it moves small assets to a smaller
effective $\rho$ without changing the product. The choice between them is a
policy question, but the magnitude at stake is computable before the fact, which
is not usually the case for market-access rules.

\subsection{Reading another market through \texorpdfstring{$\rho$}{rho}}

The structural screening rules transfer; the empirical bias curve does not.
Applying the bounds to a different product requires three numbers,
two of which are published and one of which is a price aggregate. The increment
$\delta$ and the minimum bid size come from the product specification. The
installed power $P$ is a property of the asset. The price integral $\Lambda$ is a
sum over the settlement price series, computed without any optimisation. From
these, $\rho=\delta/P$ gives the scale ratio, $u(\rho)$ gives the structurally
unsellable fraction, and $\rho\Lambda$ bounds the valuation error. Predicting
the realised bias still requires the target market's prices, product structure
and operating parameters; the French curve in Fig.~\ref{fig:rho} is not portable.

Two qualitative predictions follow and are worth stating because they are
testable elsewhere. A market whose minimum bid size exceeds the typical asset in
a target segment does not merely reduce that segment's revenue; by
Proposition~\ref{prop:elig} it removes the segment, and no amount of price
increase compensates, because the constraint is on the quantity and not on the
price. And in any market where the increment equals the minimum bid size, the
smallest eligible participant sits exactly at $\rho=1$, where
Proposition~\ref{prop:evict} applies, so the smallest participant is also the
one for whom capacity and energy are most nearly exclusive. That coincidence is
a design choice rather than a necessity: nothing prevents a market from setting a
minimum bid size well above its award increment, which would preserve
eligibility screening while removing the exclusivity.

\section{Limitations}\label{sec:limits}

The estimate isolates one modelling choice and should not be read beyond it.
Activation energy is not simulated: \eqref{eq:power} and \eqref{eq:energy}
guarantee a symmetric fifteen-minute band, but measured frequency deviations,
corrective trading, imbalance settlement and induced ageing are absent, and all
of them affect compliance and lifetime in practice \cite{thien2017,mirzaei2025}.
Prices are known perfectly, so realisable value is lower than every level
reported here. Each day is self-contained, so inter-day carry is forgone and the
daily boundary condition replaces a continuation value; a rolling horizon would
change levels, and the boundary target is itself worth
\AnchorSpread{} k\eur{}/MW/year across the range we tested, which is more than
the reference granularity gap. That sensitivity is reported separately precisely
so that it is not silently attributed to the lattice.

The implementation first solves the faster relaxation and checks every interval.
If any simultaneous charge and discharge remains, that day is re-solved with
global operating-mode binaries. The returned optimum therefore enforces mutual
exclusion everywhere, including at positive prices, without burdening days on
which the relaxation is already physically feasible.

Three limitations are specific to the claims of this paper. First, the
homogeneity of Proposition~\ref{prop:scale} fails as soon as any cost or
constraint is not homogeneous of degree one in size, which includes fixed
connection charges, size-dependent tariffs and non-proportional degradation
models; the $\rho$ parametrisation is exact for the program studied here and
approximate for a richer one. Second, the sample is short and capacity prices
trend within it, so the levels are historical and the bootstrap interval is
conditional on the observed price path rather than on the distribution that
generated it. Third, the 0.5~h boundary case turns on 1.7~kWh and should be read
as an illustration of \eqref{eq:minenergy}, not as a magnitude.

Finally, the aggregation reading of the gap is an upper bound on a gross value.
For a real pool, continuous awards become attainable only at portfolio scale and
only after aggregator fees, forecast error and portfolio constraints, none of
which is modelled.

\section{Conclusion}

Conditional on the market and operating specification, the size effect of bid
granularity is organised by the ratio of the award increment to installed power. Under
that parametrisation the valuation error of the customary continuous relaxation
is bounded at first order by the ratio times the capacity revenue of a
permanently sold megawatt; a computable fraction of the connection, up to one
half, is structurally unsellable and vanishes only when the connection is an
integer multiple of the increment; and an asset at the minimum bid size is
forced to choose between capacity and energy rather than combining them. On 943
days of French data, the last effect dominates: at $\rho=1$ the continuous
relaxation overstates revenue by \RelativeBias\%, and \SpotShareOfGap\% of that
error is arbitrage the lattice-bound asset had to abandon, not capacity it
failed to sell.

The practical rules are short. Report $\rho$ and $\rho\Lambda$. Solve both
formulations when that bound is material, especially when the connection is not
an integer multiple of the increment. Evaluate integer-multiple connection
sizes rather than assuming they are costless. Treat the difference as an error
bar on valuation, an upper benchmark for aggregation value, and a
measure of how much a market's own quantity rule costs its smallest
participants.

\section*{Reproducibility}

Every number in this paper is generated by the provided workflow. The
optimisation model, the French price and capacity series, the workbook parser
and the unit tests are public at
\url{https://github.com/brieuclerouxtardif-blip/bess-arbitrage-fr}, at commit
\texttt{7742ad9}; the experiment definitions, raw case-by-case results,
bootstrap and figure scripts are available from the author. The manuscript
reads all numerical values from a generated macro file rather than from typed
text.

\bibliographystyle{IEEEtran}
\bibliography{refs}

\end{document}